%% file: main.tex
\documentclass[11pt]{article}

\usepackage[margin=1.1in]{geometry}
\usepackage{amsmath,amssymb,amsthm}
\usepackage{booktabs}
\usepackage{graphicx}
\usepackage{microtype}
\usepackage{xcolor}
\usepackage{algorithm}
\usepackage{algpseudocode}
\usepackage[numbers,sort&compress]{natbib}
\usepackage[colorlinks=true,linkcolor=blue!50!black,citecolor=blue!50!black,urlcolor=blue!50!black]{hyperref}

\newcommand{\simplex}{\Delta^{K-1}}
\newcommand{\E}{\mathbb{E}}
\newtheorem{proposition}{Proposition}
\newcommand{\evalyears}{13.6}

\title{\textbf{KellyBoost: Growth-Optimal Portfolio Construction\\ with Gradient-Boosted Trees}}
\author{Jiayu Li\\[0.3em]\normalsize\texttt{lijiayu2027@outlook.com}}
\date{August 2026}

\begin{document}
\maketitle

\begin{abstract}
End-to-end portfolio learning --- mapping features directly to portfolio
weights under a financial objective, with no intermediate return forecast ---
has so far required backpropagation, shutting out gradient-boosted decision
trees, the workhorse of tabular machine learning. We build the missing
formulation. \textsc{KellyBoost} is a single multi-output XGBoost model
whose softmax output \emph{is} the portfolio: with $y$ the vector of
per-asset holding-period returns, the training loss is $-\log(1 + w^{\top}y)$
--- the negative log growth rate, so the fitted model is the growth-optimal
(Kelly) allocation conditioned on the features. The objective is exact
rather than a surrogate: we derive the gradient, the analytic diagonal
Hessian and the full Hessian in closed form, verify them by finite
differences, and ship a dependency-free reference engine. On a 23-year,
eight-asset public-data testbed under a strictly separated
selection/estimation protocol, we test the objective with a $2 \times 2$
design --- \{boosted tree, MLP\} $\times$ \{growth objective,
classification surrogate\} --- and the growth objective improves deployed
log growth over the surrogate within \emph{both} learner classes, in both
of two independently built feature pipelines: four of four cells ---
and, because the growth objective also trades less in every cell,
transaction costs widen the gap rather than eroding it. We are
equally explicit about the objective's boundary, because the same testbed
measures it: faithfully optimizing an \emph{estimated} conditional
full-Kelly allocation concentrates aggressively, and in this regime the
end-to-end learners trail the classic two-stage predict-then-optimize
pipeline, whose squared-error forecasts shrink toward zero and whose
optimizer therefore holds mild portfolios.
The exact Hessian beats its first-order substitutes on the selection
protocol by a factor of three and loses to them deployed: the optimizer is
not the problem; the estimated full-Kelly target is. Code, data and
scripts regenerate every number from one committed CSV without an API key.
\end{abstract}

\section{Introduction}\label{sec:intro}

Portfolio construction is a decision problem, but most machine-learning
treatments of it are prediction problems with a decision bolted on: a model
is trained to forecast returns under a statistical loss, and the forecasts
are then handed to an optimizer, together with a separately estimated
covariance matrix \citep{markowitz1952}. The mismatch is well documented.
Small forecast errors are amplified by the optimizer in exactly the
directions the optimizer cares about; expected returns are the quantity
financial data identifies \emph{worst} \citep{merton1980}; and the two
stages optimize different objectives, so improving the first need not
improve the second \citep{elmachtoub2022}.

End-to-end (decision-focused) learning removes the seam: train the model
directly on the decision objective. In portfolio management this line of
work has been carried almost exclusively by neural networks
\citep{zhang2020deep,zhang2021universal,uysal2024}, which directly optimize
Sharpe-ratio or utility objectives by gradient descent --- because gradient
descent is what the formulation requires. Gradient-boosted decision trees
(GBDTs), the workhorse of tabular machine learning and of quantitative
practice \citep{grinsztajn2022,gu2020}, have been shut out of this line of
work entirely: a tree ensemble is not trained by backpropagation, so a
practitioner who wants boosted trees on an allocation problem has had to
route them through a surrogate --- forecast returns, or classify the best
asset --- and accept the seam. What has been missing is the bridge: a
tree-boosting formulation of the end-to-end allocation problem, so that
the choice of learner class and the choice of training objective stop
being one decision.

This paper builds that bridge, and it turns out to require nothing
approximate. Whether the bridge is \emph{worth crossing} --- whether the
decision objective beats the surrogates a practitioner would otherwise use
--- is an empirical question, and answering it cleanly, in both
directions, is the paper's second half. Our contributions:

\begin{enumerate}
\item \textbf{An exact growth-optimal objective for boosted trees}
(Section~\ref{sec:method}). One multi-output XGBoost model
\citep{chen2016xgboost} produces a logit vector per row; its softmax is a
long-only, fully-invested portfolio on the simplex; the per-row loss is
$-\log(1+w^{\top}y)$, whose sample mean is the negative log growth rate ---
the Kelly criterion \citep{kelly1956,breiman1961}. We derive the gradient
and the analytic diagonal Hessian in closed form
(verified by finite differences in the accompanying test suite), so the model
trains with true second-order boosting steps. The tuning metric is the
training loss itself: nothing in the pipeline optimizes a proxy.
\item \textbf{A deployment recipe for tree instability}
(Section~\ref{sec:ensemble}). We show that two boosted fits whose training
sets differ by a \emph{single dropped row} can disagree materially in the
allocation they produce --- near-tied split points flip, and the
perturbation saturates at one row. A single fit is a draw from a
distribution, not its centre. We deploy a leave-one-out ensemble: $K$
members, each fitted on the data minus one row, at \emph{unchanged}
hyperparameters --- unlike bagging, the perturbation is too small to change
effective capacity, so parameters tuned for a single fit remain valid.
\item \textbf{A cleanly separated evaluation protocol}
(Section~\ref{sec:protocol}). Hyperparameters are selected on a development
segment (ending 2012) with a purged single-row-block walk-forward whose score
is treated strictly as a \emph{selection signal}; performance is then
measured once, on the untouched evaluation segment (2013--2026), by a
deployment walk-forward with frozen parameters, expanding windows, and an
end-anchored decision grid. The distinction between selection signals and
performance estimates is enforced throughout.
\item \textbf{A fully reproducible testbed} (Section~\ref{sec:data}). Eight
asset legs, a 7{,}871-column candidate feature set, and a joint
(hyperparameter, feature list) search shared by every learned method
(Section~\ref{sec:search}), over 2003--2026 --- built exclusively from
freely downloadable public market data (no API keys), with causality
guaranteed by construction: every input is a same-day market close, every
transform is backward-looking, and the raw snapshot is committed to the
repository.
\end{enumerate}

Out of sample, over \evalyears{} years, the comparisons that isolate the
loss all land the same way. The study's $2 \times 2$ design --- \{boosted
tree, MLP\} $\times$ \{growth objective, classification surrogate\} ---
holds the learner class fixed and swaps only the training loss, and the
growth objective wins within both learner classes, in both feature
pipelines: four of four cells, with a pooled paired effect of $+0.18$ per
20-day decision ($\times 100$) that \emph{rises} to $+0.25$ under 20 bps
of transaction costs, because the growth objective also trades less in
every cell (Section~\ref{sec:results}). That is the
paper's claim, and it is the whole claim. Comparisons \emph{across}
learner classes (tree versus MLP at fixed loss) are confounded by
capacity and optimization geometry, land within bootstrap noise, and we
draw no conclusion from them. The comparisons against
differently-\emph{shaped} pipelines mark the objective's boundary rather
than its value: the two-stage predict-then-optimize pipeline beats
every end-to-end learner in this regime (no single
pairwise gap clears its bootstrap interval --- thirteen years of monthly
decisions is a small sample, and we say so). The ablations give that
boundary its mechanism: the exact Hessian wins the development-segment
selection score by a factor of three over a constant-curvature substitute
and then \emph{loses} to it out of sample (Section~\ref{sec:ablations}).
Optimization skill transfers; the estimated full-Kelly target does not.
Every mechanism in the study that shrinks allocations toward
diversification --- the two-stage pipeline's near-zero squared-error
forecasts, the constant Hessian's undersized steps --- improves deployed performance, which locates the boundary
squarely in the growth-optimal objective's interaction with estimation
error \citep{maclean2011,demiguel2009}, not in trees, boosting, or the
derivations. Our code, data and scripts regenerate every table and
figure from one committed CSV.

\section{Related work}\label{sec:related}

\paragraph{Growth-optimal investment.} The criterion of maximizing the
expected logarithm of wealth originates with \citet{kelly1956} and was
given its asymptotic optimality theory by \citet{breiman1961}; see
\citet{maclean2011} for a survey and \citet{hakansson1971} for its relation
to mean-variance analysis. \citet{cover1991} constructs portfolios that are
growth-optimal in hindsight without distributional assumptions. This
literature conditions on \emph{no} information or on price history alone;
KellyBoost is the conditional version: a nonparametric estimate of the
log-optimal portfolio as a function of an arbitrary feature vector.

\paragraph{End-to-end portfolio learning.} \citet{zhang2020deep} train
networks to maximize Sharpe ratio directly; \citet{zhang2021universal}
generalize to several objectives while bypassing covariance estimation;
\citet{uysal2024} do end-to-end risk budgeting. All are neural. The
predict-then-optimize critique is formalized by \citet{elmachtoub2022}.
Our contribution is orthogonal to architecture novelty: we bring the
end-to-end objective to the model class that empirically dominates tabular
problems of this size \citep{grinsztajn2022}.

\paragraph{Boosting with custom objectives.} Gradient boosting accepts any
twice-differentiable loss \citep{friedman2001,chen2016xgboost}; XGBoost's
multi-output mode grows one tree per round with vector-valued leaves. We
use this machinery unchanged --- the novelty is the objective and the
demonstration that its \emph{exact} curvature, not a first-order
substitute, is what makes it work (Section~\ref{sec:ablations}).

\paragraph{Evaluation discipline.} Purging and embargoing overlapping
labels follows \citet{lopezdeprado2018}; the reasons to distrust selected
backtest maxima are quantified by \citet{bailey2014}; sensitivity of
backtest conclusions to the rebalance grid's phase is documented by
\citet{hoffstein2020}, which motivates our end-anchored decision grid.

\section{Method}\label{sec:method}

\subsection{The objective and its exact derivatives}\label{sec:objective}

Let each training row consist of features $x_t \in \mathbb{R}^d$ and
realized simple returns $y_t \in \mathbb{R}^K$ of $K$ assets over the
holding period $[t, t+h]$. A multi-output boosted-tree model produces
logits $z(x_t) \in \mathbb{R}^K$; the portfolio is
\[
w_t \;=\; \sigma(z(x_t)) \in \simplex,
\qquad
\sigma_k(z) = \frac{e^{z_k}}{\sum_j e^{z_j}},
\]
i.e.\ long-only and fully invested by construction (a cash leg makes
``fully invested'' unrestrictive in practice). The portfolio's
holding-period return is $S_t = w_t^{\top} y_t$ and the per-row loss is
\begin{equation}
\ell_t \;=\; -\log(1 + S_t),
\qquad
\frac{1}{n}\sum_t \ell_t \;=\; -\,\widehat{\E}\!\left[\log(1+S)\right],
\label{eq:loss}
\end{equation}
the negative sample log growth rate: minimizing \eqref{eq:loss} maximizes
the geometric growth of wealth, so the fitted model is the growth-optimal
(Kelly) allocation conditioned on $x$. There is no forecasting step, no
covariance matrix, and no separate optimizer; and because \eqref{eq:loss}
is itself the economic objective, hyperparameter selection and evaluation
can use the same quantity, eliminating every surrogate-metric seam in the
pipeline.

The population version of this statement pins down what the model estimates:

\begin{proposition}[conditional growth optimality]\label{prop:kelly}
Let $(x, y)$ be jointly distributed with $\E\,|\log(1 + w^{\top}y)| < \infty$
for all $w \in \simplex$. Among all measurable maps $z : \mathbb{R}^d \to
\mathbb{R}^K$, any minimizer of $\E\left[-\log\!\big(1 +
\sigma(z(x))^{\top} y\big)\right]$ satisfies, for almost every $x$,
\[
\sigma(z(x)) \;\in\; \arg\max_{w \in \overline{\mathrm{im}\,\sigma}}\;
\E\left[\log(1 + w^{\top}y) \,\middle|\, x\right],
\]
i.e.\ the fitted portfolio is the growth-optimal (Kelly) portfolio on the
simplex, conditioned on the features.
\end{proposition}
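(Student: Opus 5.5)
The plan is to reduce the functional minimization to a pointwise, conditional one via the tower property, and then show that a minimizer which fails the pointwise optimality on a set of positive measure can be strictly improved. Because we are optimizing over \emph{all} measurable $z$, the only real work is to make the pointwise conditional objective well defined and continuous in $w$, and to perform the improvement measurably.

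\textbf{Step 1 (integrable envelope and conditional objective).} Since $w \in \simplex$ and $\log(1+\cdot)$ is concave and increasing, for every $w$ in the simplex
\[
\sum_k w_k \log(1+y_k) \;\le\; \log(1+w^{\top}y) \;\le\; \max_k \log(1+y_k).
\]
Hence $|\log(1+w^{\top}y)| \le G(y) := \sum_k |\log(1+y_k)|$. Applying the hypothesis at the vertices $w=e_k$ gives $\E\, G(y) < \infty$, so we have a single integrable envelope that is uniform in $w$. Take a regular conditional distribution $P(dy \mid x)$, which exists on Euclidean spaces, and define
\[
g(x,w) = \int \log(1+w^{\top}y)\, P(dy\mid x).
\]
For almost every $x$ we have $\int G\, dP(\cdot\mid x) < \infty$. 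On that set, dominated convergence makes $g(x,\cdot)$ finite, concave and continuous on the closed simplex $\overline{\mathrm{im}\,\sigma} = \simplex$. Moreover $g$ is measurable in $x$ for each fixed $w$, so $g$ is Carath\'eodory and therefore jointly measurable. By the tower property, $\E[-\log(1+\sigma(z(x))^{\top}y)] = -\E\, g(x,\sigma(z(x)))$ for every measurable $z$. Continuity on the closed simplex also shows that $g^*(x) := \max_{w\in\simplex} g(x,w)$ equals the supremum of $g(x,\cdot)$ over the open simplex $\mathrm{im}\,\sigma$. In particular $g^*(x) = \sup_{q\in Q} g(x,q)$ for a countable dense set $Q$ of interior points (say rational ones), which makes $g^*$ measurable.

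\textbf{Step 2 (improvement argument).} Let $z^*$ be a minimizer and suppose the bad set $B = \{x : g(x,\sigma(z^*(x))) < g^*(x)\}$ has positive measure. Enumerate $Q = \{q_1, q_2, \dots\}$. For $x\in B$, let $m(x)$ be the least index with $g(x,q_{m}) > g(x,\sigma(z^*(x)))$; such an index exists by density of $Q$ and continuity of $g(x,\cdot)$. The map $m$ is measurable because it is a countable minimum of measurable conditions. Define $\tilde z(x) = \log q_{m(x)}$ on $B$, which is legitimate since $q_m$ is interior so that $\sigma(\log q_m) = q_m$, and $\tilde z = z^*$ off $B$. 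Then $\tilde z$ is measurable and $g(x,\sigma(\tilde z(x))) \ge g(x,\sigma(z^*(x)))$ everywhere, with strict inequality on $B$. Both quantities are integrable, being bounded by the envelope, so taking expectations gives a strictly smaller loss and contradicts minimality. Hence $P(B)=0$, which is exactly the claim.

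\textbf{Main obstacle.} I expect the hardest part to be the measure-theoretic bookkeeping rather than any optimization. This means securing a version of the conditional expectation that is simultaneously valid for all $w$, which is why I use the regular conditional distribution, the uniform envelope $G$ and continuity. It also means keeping the improving selection measurable. The remark that $\sup$ over the open simplex equals $\max$ over its closure explains why the statement uses $\overline{\mathrm{im}\,\sigma}$: a minimizer need not exist when the conditional Kelly portfolio sits on the boundary. The proposition only speaks about minimizers that do exist, so no existence claim is needed.
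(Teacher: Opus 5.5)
Your proposal is correct and takes the same route as the paper's proof. The objective decouples pointwise in $x$, and density of $\mathrm{im}\,\sigma$ in $\simplex$ identifies the pointwise optimum with the maximum over the closed simplex. The paper states this in one line; you supply what it leaves implicit: a regular conditional distribution with the uniform envelope $\sum_k |\log(1+y_k)|$, which gives a version of the conditional objective defined for all $w$ at once and continuous on the closed simplex, and a measurable improving selection from a countable dense set of interior points, which justifies the paper's ``$z(x)$ can be chosen freely for each $x$''.
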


\begin{proof}
The objective is an expectation of a per-$x$ quantity, and $z(x)$ can be
chosen freely for each $x$; hence the minimization decouples pointwise:
$z^{*}(x)$ must maximize $\E[\log(1 + \sigma(z)^{\top}y) \mid x]$ over
$z \in \mathbb{R}^K$, whose image under $\sigma$ is dense in $\simplex$.
\end{proof}

\noindent KellyBoost is thus a nonparametric, tree-structured estimator of
the conditional Kelly allocation --- the conditional counterpart of the
unconditional growth-optimal portfolios of the classical literature
\citep{kelly1956,breiman1961,cover1991}, with the boosting machinery
supplying the conditioning.

XGBoost requires the gradient and Hessian of $\ell$ with respect to the raw
scores $z$. Both are available in closed form. Using
$\partial \sigma_k / \partial z_k = \sigma_k(1-\sigma_k)$ and
$\partial S / \partial z_k = \sigma_k (y_k - S)$:
\begin{equation}
g_k \;\equiv\; \frac{\partial \ell}{\partial z_k}
\;=\; \frac{\sigma_k \,(S - y_k)}{1 + S},
\qquad
h_k \;\equiv\; \frac{\partial^2 \ell}{\partial z_k^2}
\;=\; g_k\,(1 - 2\sigma_k) \;+\; g_k^2 .
\label{eq:gradhess}
\end{equation}
The derivation is three lines (Appendix~\ref{app:derivation}); both
expressions are verified against central finite differences in the
accompanying test suite. The gradient has a transparent reading: asset $k$'s logit is
pushed up when it out-performs the portfolio it is competing against
($y_k > S$), with force proportional to its current weight and inversely
proportional to realized wealth growth --- the compounding term $1/(1+S)$
that distinguishes log growth from mean return.

The loss \eqref{eq:loss} is not convex in $z$, so the true diagonal
curvature $h_k$ can be negative; XGBoost's leaf weights divide by the
aggregated Hessian, which must be positive, and we take $|h_k|$. The next
subsection examines this safeguard --- and the diagonal approximation
itself --- in detail, because neither is cosmetic.

\paragraph{A risk-aversion dial.} Nothing above is specific to the
logarithm. The accompanying implementation accepts the CRRA family
$\phi_{\gamma}(S) = \big((1+S)^{1-\gamma} - 1\big)/(\gamma - 1)$, with
$\gamma \to 1$ recovering $-\log(1+S)$: the gradient is
$\phi_{\gamma}'(S)\, a_k$ and the analytic diagonal Hessian
$\phi_{\gamma}''(S)\, a_k^2 + g_k(1 - 2\sigma_k)$, where
$a_k = \sigma_k (y_k - S)$ (Appendix~\ref{app:derivation}), both verified
by finite differences. Training with $\gamma > 1$ tempers the
growth-optimal policy inside the objective, the utility-side analogue of
fractional Kelly \citep{maclean1992}. The hand-built pipeline fixes
$\gamma = 1$; the joint search of Section~\ref{sec:search} is offered
$\gamma \in [1, 10]$ as a tunable and commits $\gamma = 1.14$ --- an
essentially logarithmic setting: the selection protocol sees no
in-sample reason to temper the objective, which is why
Section~\ref{sec:limitations} argues the risk dial must be set by
preference rather than by search.

\subsection{The curvature, examined}\label{sec:curvature}

Three questions a second-order method on a non-convex loss owes its
reader: does the rectified step still descend, how often does
rectification actually fire, and what do the discarded off-diagonal terms
cost? Each has a sharp answer here, and all measurements below are on the
real training problem at the frozen parameters
(\texttt{experiments/curvature.py}).

\paragraph{Descent survives rectification.} A leaf's value under any
\emph{positive} per-coordinate curvature $\tilde H_{jk} > 0$ is
$v_{jk} = -G_{jk}/(\tilde H_{jk} + \lambda)$, so the first-order change of
the round objective is $\sum_k G_{jk} v_{jk} = -\sum_k G_{jk}^2 /
(\tilde H_{jk} + \lambda) < 0$: every leaf update is a descent step on the
current linearization regardless of how the curvature was made positive,
with $\lambda$ bounding the step. Rectification therefore affects step
\emph{sizes}, never step \emph{direction}, and the learning rate $\eta$
provides the usual damping on top.

\paragraph{Rectification is the working regime, not an edge case.}
Holding-period returns are small, so expanding \eqref{eq:gradhess} in
$\|y\|$ gives $h_k = g_k(1 - 2\sigma_k) + g_k^2$ with the linear term
dominant: the exact curvature's sign follows the gradient's whenever
$\sigma_k < \tfrac12$, i.e.\ negative curvature is expected on roughly the
cells whose asset is currently \emph{outperforming} the portfolio.
Measured at the committed configuration: $50\%$ of (row, asset) cells are
rectified at the first round, declining to $27\%$ by the last. The safeguard is
central to the objective, which is why it deserves a name rather than an
apology: $|h_k|$ is precisely the diagonal case of saddle-free Newton,
which preconditions by $|H|$ in eigenvalue terms \citep{dauphin2014}. The
ablation's \texttt{grad2} mode is the other principled positive surrogate
--- the diagonal Gauss--Newton curvature $\phi'' a_k^2$
\citep{martens2010}, which for the log loss equals $g_k^2$ exactly --- so
Table~\ref{tab:ablations} compares the exact rectified diagonal against
\emph{both} textbook alternatives, not against a straw man.

\paragraph{The off-diagonal terms are measurable --- and priced in.} The
full per-row Hessian has the closed form
\begin{equation}
H \;=\; \phi''(S)\, a a^{\top}
\;+\; \phi'(S)\,\big(\mathrm{diag}(a) - \sigma a^{\top} - a \sigma^{\top}\big),
\qquad a = \big(\mathrm{diag}(\sigma) - \sigma \sigma^{\top}\big)\, y,
\label{eq:fullhess}
\end{equation}
symmetric by construction (Appendix~\ref{app:derivation}) and
finite-difference-verified. Off-diagonal entries carry about half its
Frobenius mass per row ($0.50$ at round $0$, drifting to $0.62$), so the
diagonal approximation cannot be defended by sparsity. It is defended at
the operating point instead: what a leaf applies is not the row Hessian
but $\sum_{i \in I_j} H_i$ shrunk by the tuned $\lambda$, and there the
diagonal step $-G/(\Sigma|h| + \lambda)$ and the full eigenvalue-rectified
Newton step $-(|\Sigma H| + \lambda I)^{-1} G$ are directionally
indistinguishable --- mean cosine $\geq 0.99$ on leaf-sized row sets at
every stage of training. The end-to-end check agrees: the pure-numpy
engine optionally solves every leaf against the aggregated full Hessians
(\texttt{leaf\_solver="full"}; split search unchanged), and under the
tuning protocol the two leaf solvers score within noise of each other
(Section~\ref{sec:ablations}). The conclusion worth carrying forward:
with $(\lambda, \texttt{min\_child\_weight})$ at their tuned values, the
curvature's real contribution is \emph{per-coordinate step scaling} ---
which is exactly the margin the Hessian-mode ablation prices, and why the
all-ones Hessian, whose scale is wrong by orders of magnitude, fails
hardest there.

\subsection{Second-order boosting with vector leaves}\label{sec:trees}

The model is the standard boosted additive ensemble, except that each tree
is vector-valued. After $M$ rounds,
\[
z(x) \;=\; \eta \sum_{m=1}^{M} f_m(x),
\qquad
f_m : \mathbb{R}^d \to \mathbb{R}^K,
\]
where each $f_m$ is a binary decision tree whose \emph{leaves hold
$K$-vectors} (XGBoost's \texttt{multi\_strategy = multi\_output\_tree}): a
row is routed by ordinary scalar feature splits to a single leaf $j$, and
the whole logit vector receives that leaf's value $v_j \in \mathbb{R}^K$.

\paragraph{One round.} Let $Z$ be the current logit matrix and
$g_{ik}, h_{ik}$ the per-row derivatives \eqref{eq:gradhess} evaluated at
$Z$. Because the Hessian we supply is diagonal, the second-order expansion
of the loss around $Z$ separates over outputs:
\begin{equation}
\sum_i \ell\big(z_i + f(x_i)\big)
\;\approx\;
\text{const} + \sum_{i}\sum_{k}
\Big[ g_{ik}\, f_k(x_i) + \tfrac12\, h_{ik}\, f_k(x_i)^2 \Big]
\;+\; \gamma\, T \;+\; \tfrac{\lambda}{2} \sum_{j,k} v_{jk}^2,
\label{eq:round}
\end{equation}
with $T$ the number of leaves. For a \emph{fixed} tree structure with leaf
row-sets $I_j$, writing $G_{jk} = \sum_{i \in I_j} g_{ik}$ and
$H_{jk} = \sum_{i \in I_j} h_{ik}$, the optimal leaf vector and its
objective reduction are the coordinate-wise Newton step
\begin{equation}
v_{jk}^{*} \;=\; -\,\frac{G_{jk}}{H_{jk} + \lambda},
\qquad
\text{reduction}(I_j) \;=\; \frac12 \sum_{k=1}^{K}
\frac{G_{jk}^2}{H_{jk} + \lambda}.
\label{eq:leaf}
\end{equation}
Split search is the usual greedy scan, but scored jointly: splitting $I$
into $(I_L, I_R)$ gains
\begin{equation}
\mathrm{Gain}
= \frac12 \sum_{k=1}^{K} \left[
\frac{G_{Lk}^{2}}{H_{Lk}+\lambda}
+ \frac{G_{Rk}^{2}}{H_{Rk}+\lambda}
- \frac{G_{k}^{2}}{H_{k}+\lambda}
\right] - \gamma .
\label{eq:gain}
\end{equation}
Equations \eqref{eq:leaf}--\eqref{eq:gain} make the division of labour
explicit: \emph{given} the partition, the $K$ outputs decouple into $K$
independent scalar Newton problems; all coupling between assets flows
through the \emph{shared split search}, where a candidate split must pay
for itself summed across every output. A vector-leaf tree is exactly $K$
scalar boosting problems forced to agree on one partition of feature
space. Algorithm~\ref{alg:kellyboost} assembles the whole procedure; the
per-round cost is that of a scalar tree plus a factor $K$ in the leaf
statistics, and the node Hessian mass that \eqref{eq:leaf} accumulates is
also what the minimum-child-weight constraint meters, so the familiar
regularizers carry over unchanged.

\begin{algorithm}[t]
\caption{KellyBoost training (exact objective, vector leaves)}
\label{alg:kellyboost}
\begin{algorithmic}[1]
\Require features $X \in \mathbb{R}^{n \times d}$; holding-period returns
$Y \in \mathbb{R}^{n \times K}$; rounds $M$; learning rate $\eta$;
regularizers $\lambda, \gamma$
\State $Z \gets 0_{n \times K}$ \Comment{logit matrix}
\For{$m = 1, \dots, M$}
  \State $\sigma_i \gets \mathrm{softmax}(Z_{i\cdot})$;\quad
         $S_i \gets \sigma_i^{\top} Y_{i\cdot}$ \textbf{for all} $i$
  \State $g_{ik} \gets \sigma_{ik}\,(S_i - Y_{ik})\,/\,(1 + S_i)$
         \Comment{exact gradient, eq.~\eqref{eq:gradhess}}
  \State $h_{ik} \gets \big|\, g_{ik}(1 - 2\sigma_{ik}) + g_{ik}^2 \,\big|$
         \Comment{analytic diagonal Hessian, abs safeguard}
  \State grow one tree $f_m$ greedily from the root:
  \Statex \hspace{\algorithmicindent}\hspace{\algorithmicindent}
    at each node, take the feature/threshold split maximizing
    $\mathrm{Gain}$ in eq.~\eqref{eq:gain};
  \Statex \hspace{\algorithmicindent}\hspace{\algorithmicindent}
    stop when no split has $\mathrm{Gain} > 0$ or a size/Hessian-mass
    constraint binds
  \State set each leaf $j$ of $f_m$ to the $K$-vector
         $v_{jk} = -\,G_{jk}/(H_{jk} + \lambda)$
         \Comment{Newton step, eq.~\eqref{eq:leaf}}
  \State $Z \gets Z + \eta\, f_m(X)$
\EndFor
\State \Return $z(\cdot) = \eta \sum_m f_m(\cdot)$;\quad
       portfolio $w(x) = \mathrm{softmax}(z(x))$
\end{algorithmic}
\end{algorithm}

\paragraph{Why one tree per round fits allocation.} The softmax couples
the outputs --- only differences between logits matter --- so allocation
is inherently a joint decision (``growth stock \emph{versus} gold''), and
a shared partition of feature space is both a useful inductive bias and a
$K$-fold parameter reduction: a regime variable that matters for the whole
book (a curve inversion, a volatility spike) buys one split serving all
$K$ outputs, and the joint gain \eqref{eq:gain} prices exactly that. The
alternative, $K$ independent trees per round
(\texttt{one\_output\_per\_tree}), can only rediscover the same split $K$
times from $K$ noisy scalar signals. The ablation
(Table~\ref{tab:ablations}) quantifies the difference. A version note
that matters for reproduction: xgboost $\geq 3.4$ is required, as earlier
versions silently ignore \texttt{min\_child\_weight} under vector-leaf
training.

\paragraph{A dependency-free reference implementation.} The accompanying code
contains two interchangeable engines behind one interface: XGBoost's C++
vector-leaf trees (used for every number in this paper), and a
self-contained pure-numpy implementation of Algorithm~\ref{alg:kellyboost}
of about two hundred lines --- quantile binning, leaf-wise growth, the
joint gain \eqref{eq:gain} and vector Newton leaves \eqref{eq:leaf} ---
with no compiled dependency. Its split selection and leaf values are
tested \emph{exactly} (to $10^{-10}$) against brute-force evaluations of
\eqref{eq:leaf}--\eqref{eq:gain}, and the two engines agree out of sample
on planted-signal problems. Building it surfaced a portability caveat
worth recording: \texttt{min\_child\_weight} semantics under vector leaves
are not standardized --- our engine meters a child's Hessian mass summed
over outputs, XGBoost's gate is stricter --- so the value buys different
effective capacity on different engines and should be re-tuned, not
ported.

\subsection{Deployment: fixed-parameter leave-one-out ensembling}
\label{sec:ensemble}

Boosted trees are chaotically sensitive to the training set: when two
candidate splits are near-tied, an infinitesimal perturbation flips the
winner, and the divergence compounds through subsequent rounds. In our
setting the effect is large enough to matter economically. Deleting a
\emph{single} row from a multi-thousand-row training set --- roughly
$0.02\%$ of the data, and an interior row whose information is nearly
duplicated by its neighbours, since adjacent rows share $h-1$ days of their
labels --- moves the allocation the model produces for the live decision
row by several percentage points on a leg
(measured in Section~\ref{sec:ablations}). A single fit is therefore one draw from
a distribution induced by irrelevant detail, not the centre of that
distribution.

The deployment recipe: fit $K_{\mathrm{ens}}$ members, member $s$ on the
training set minus one row drawn deterministically from seed $s$, and
average the members' weight vectors (the mean of simplex points stays on
the simplex). Crucially, every member runs the \emph{committed}
hyperparameters unchanged. Classic bagging \citep{breiman1996bagging}
resamples aggressively and would change the effective capacity the
hyperparameters were tuned for; the one-row perturbation is instead the
smallest change that decorrelates the split-tie coin flips, so tuning
remains valid and nothing needs re-searching. Averaging pulls the
allocation toward the centre at the usual $1/\sqrt{K_{\mathrm{ens}}}$ rate,
and the effect saturates quickly (Section~\ref{sec:ablations}); we deploy
$K_{\mathrm{ens}} = 4$ for every learned method in the comparison. Seeds
are the member indices, so the deployed book is a deterministic function of
the data.

\subsection{Feature selection as part of selection}\label{sec:search}

A tabular learner is only as good as its feature list, and in the
monthly-horizon regime the list that helps is short. We therefore treat the
feature list as a hyperparameter: the search layer proposes states
$(\theta, F)$ --- a hyperparameter vector and a feature list --- and every
learned method in the comparison is searched the same way with the same
candidate set and budget.

\paragraph{Candidates.} From the raw snapshot (Section~\ref{sec:data}) we
build one large \emph{candidate set}: for every price-like series,
lookback return, drawdown, moving-average ratio, $z$-score, realized
volatility, return skewness and kurtosis, a Hurst exponent and a percentile
rank over 1-, 3-, 6-, 12- and 24-month windows; for every yield-like
series, change, moving-average slope, $z$-score, change volatility,
skewness, kurtosis and percentile rank over the same windows; and, for
every series and window, the window-quantile features of
\citet{dempster2023quant} --- quantiles of the $z$-scored window, of its
smoothed first differences and of its second differences, read over the
whole window, its recent half and its recent quarter. Every column is a
backward-looking transform of same-day closes; the hand-built 174-column
set of the no-search pipeline is included unchanged. This gives 7{,}871
columns, finite everywhere from 2003-02.

\paragraph{Ranking.} The search needs an \emph{add pool} it can draw from.
The root feature list is the 20 columns with the largest split gain when
the committed configuration is fitted on the hand-built set (what the
no-search pipeline already relied on most). Every other candidate is then
fitted \emph{once}, added to that root, and ranked by its share of the
fit's total gain --- a marginal-gain ranking, one small fit per candidate
(7{,}856 fits; minutes on 30 cores). The top 300 form the pool. Gain is a
within-fit quantity and is not the selection score; it only orders the
candidates the protocol will later judge.

\paragraph{Search.} A stochastic local search over states, each scored
once by the protocol of Section~\ref{sec:protocol}. Starting from the
committed hyperparameters with the root feature list, each iteration
proposes one local move --- perturb the hyperparameters (within the
spaces of Appendix~\ref{app:tuning}), add one pool column
(rank-weighted), or drop one low-gain column --- and re-scores the
resulting state; columns whose split gain falls to zero are pruned, so
no state carries dead weight. The search runs for a wall-clock budget
(60 minutes on 30 cores for KellyBoost, 45 for each baseline) and
commits the best-scoring state. The MLP has
no split gain; it borrows KellyBoost's root and pool and searches its own
hyperparameters and list over them. The development segment is the only
data any of this ever sees.

\subsection{Selection versus estimation}\label{sec:protocol}

We keep two questions --- \emph{which configuration is better} and
\emph{how good is the result} --- in two separate instruments, and never
read one as the other.

\paragraph{Selection.} Hyperparameters \emph{and the feature list} are
chosen together on the \emph{development segment} only (2003-02 to
2012-12), by the search of Section~\ref{sec:search}. Each configuration is
scored by a purged walk-forward: test blocks are \emph{single rows} every
11 rows; the $h-1$ rows before a test row, whose labels overlap it, are
purged from training, and 60 rows after it are embargoed
\citep{lopezdeprado2018}. Single-row test blocks are a fidelity choice: in
deployment the predicted row always sits immediately after the training
window, and wider test blocks would score rows at distances from the
training edge that never occur live. Each block is scored by a single fit
--- the mean over blocks already averages away fit noise. This score
\emph{ranks} configurations; because each block's training set contains
data from after its test row, it is near-in-sample by construction and is
never quoted as performance.

\paragraph{Estimation.} Performance is measured once, on the
\emph{evaluation segment} (2013-01 to 2026-07), by a deployment
walk-forward that only ever looks backward: decisions every 21 trading
days; at each decision the model is refit (as the
Section~\ref{sec:ensemble} ensemble) on an expanding window containing
exactly the rows whose labels are fully realized by the decision date;
parameters stay frozen at the development winners. The decision grid is
anchored at the \emph{last} decidable row and counted backwards, so the
newest information is always used; a front-anchored grid can leave the
final decision up to 20 rows stale, and grid phase alone is known to move
backtest conclusions \citep{hoffstein2020}. The evaluation segment was not
consulted at any point during development of the method or the tuning.

\section{Data}\label{sec:data}

All inputs are daily series freely downloadable from Yahoo Finance without
an API key; the raw snapshot (38 series, 1985--2026) is committed to the
repository, and everything downstream is a deterministic function of that
one file.

\paragraph{Asset legs.} Eight legs: growth equity (VIGRX), value equity
(VIVAX), long-term Treasuries (VUSTX), international equity (VGTSX),
energy equity (VGENX), gold (GC=F), silver (SI=F), and cash. Index mutual
funds are chosen over the equivalent ETFs for their longer history (VUSTX
starts in 1986 versus 2002 for TLT); their adjusted closes include
distributions, so all legs are total-return series. The cash leg pays the
13-week T-bill yield (\^{}IRX): its 20-day accrual
$\text{yield}/100 \times 20/252$ is known at decision time.

\paragraph{Feature-only series.} The S\&P~500, VIX, three Treasury yields,
copper, eight SPDR sectors, and fifteen cross-asset context series ---
the dollar index, Russell 2000, Nasdaq, emerging-market equity, REITs,
investment-grade and high-yield credit, intermediate Treasuries, TIPS,
natural gas, the CBOE SKEW index, corn, wheat, soybeans and the yen ---
again using mutual-fund share classes where the ETF is too young, so that
every series reaches back to 2001 or earlier. None is ever a leg.

\paragraph{Features.} Two feature sets appear in the paper. The
\emph{hand-built} set: 174 columns --- momentum, volatility,
moving-average ratio, drawdown and $z$-score over 21--252-day windows for
each of 17 price series, level and 21/63-day changes for four Treasury
yields and three curve slopes, and level, trend and one-year percentile for
the VIX. The \emph{candidate} set of Section~\ref{sec:search}: 7{,}871
columns, a superset of the hand-built set, from which the search selects
each method's deployed list. Both obey the same rules: there are no macro
releases, hence no publication lags, no revisions, and no interpolation
toward future values anywhere --- what the model sees on date $t$ was
public at $t$'s close. Futures series are forward-filled onto the NYSE
calendar (a causal fill). Each assembled panel is finite everywhere by
construction --- assembly fails otherwise --- and spans 2003-02 (hand-built:
2001-08) to 2026-07 once all warm-up windows are full.

\paragraph{Labels.} Per-leg forward 20-trading-day simple returns.
Adjacent rows overlap $h-1 = 19$ days of their labels; the purge in
Section~\ref{sec:protocol} exists precisely for this overlap.

\section{Experiments}\label{sec:experiments}

\subsection{Methods compared}\label{sec:methods}

Every learned method draws from the identical candidate set and label
frame, runs the identical search (Section~\ref{sec:search}) under the
identical development protocol with the budgets of
Appendix~\ref{app:tuning}, freezes its winner --- hyperparameters and
feature list --- and deploys with the identical $K_{\mathrm{ens}}=4$
leave-one-out ensemble and walk-forward.

\begin{itemize}
\item \textbf{KellyBoost} (ours): Section~\ref{sec:method}.
\item \textbf{Two-stage predict-then-optimize}: one LightGBM regressor
\citep{ke2017lightgbm} per risky leg forecasts the 20-day return; the
forecast vector $\hat\mu$ and a trailing sample covariance $\widehat\Sigma$
of the label vectors feed
$\max_{w \in \simplex} \; w^{\top}\hat\mu - \tfrac12 w^{\top}\widehat\Sigma\, w$
--- the second-order expansion of expected log growth
\citep{hakansson1971}, i.e.\ a plug-in Kelly portfolio. The covariance
window is tuned alongside the regressor's hyperparameters.
\item \textbf{Multiclass surrogate}: LightGBM multiclass on the label
``index of the best-performing leg''; the predicted class-probability
vector is itself a softmax portfolio. This is the natural classification
shortcut to the same output space.
\item \textbf{MLP, same loss}: a feed-forward network (up to two hidden
layers, Adam, weight decay) trained on exactly
loss~\eqref{eq:loss} with a softmax head --- the neural end-to-end
reference, sharing every convention including the ensemble.
\item \textbf{MLP, surrogate loss}: the identical network trained with
cross-entropy on the argmax label. With the two tree methods this
completes a $2 \times 2$ design --- \{boosted tree, MLP\} $\times$
\{growth objective, classification surrogate\} --- so the effect of the
loss can be read \emph{within} each learner class, where nothing else
differs.
\item \textbf{Unconditional Kelly}: the constant simplex portfolio
maximizing in-sample log growth, refit each decision date --- what
conditioning on features is worth is the gap to this baseline
\citep{cover1991}.
\end{itemize}

\subsection{Metrics}\label{sec:metrics}

The primary metric is the realized mean log growth per 20-day decision,
$\overline{\log G} = \frac{1}{T}\sum_t \log(1 + w_t^{\top} y_t)$ over the
evaluation decisions --- the deployed quantity the objective optimizes. We
also report, from daily equity curves with holdings drifting between
rebalances: annualized geometric return, annualized volatility, Sharpe
ratio of daily returns in excess of the T-bill leg, and maximum drawdown.
Pairwise differences in daily log growth against KellyBoost get
moving-block bootstrap 95\% intervals (block 21 days, 5{,}000 draws)
\citep{politis1994}. The $2 \times 2$ loss-effect pairs get their own
paired test at the decision level: per-decision differences in realized
log growth, circular moving-block bootstrap (block 6 decisions, 10{,}000
draws), per cell and pooled across the four aligned cells
(Table~\ref{tab:losseffect}). Backtests are gross of costs; Table~\ref{tab:costs}
prices costs back in via measured turnover.

\section{Results}\label{sec:results}

\begin{table}[t]
\centering
\caption{Out-of-sample results on the evaluation segment (2013-01 to
2026-07, 163 monthly decisions, gross of costs). $\overline{\log G}$ is the
mean realized log growth per 20-day decision ($\times 100$). Sharpe is
daily returns in excess of T-bills, annualized. The last column is the
moving-block-bootstrap 95\% interval for the difference in annualized daily
log growth versus KellyBoost (negative = worse than KellyBoost).}
\label{tab:main}
\medskip
\IfFileExists{tables/main_results.tex}{\footnotesize\input{tables/main_results}}{[run experiments]}
\end{table}

\begin{figure}[t]
\centering
\IfFileExists{figures/equity_curves.pdf}{\includegraphics[width=\textwidth]{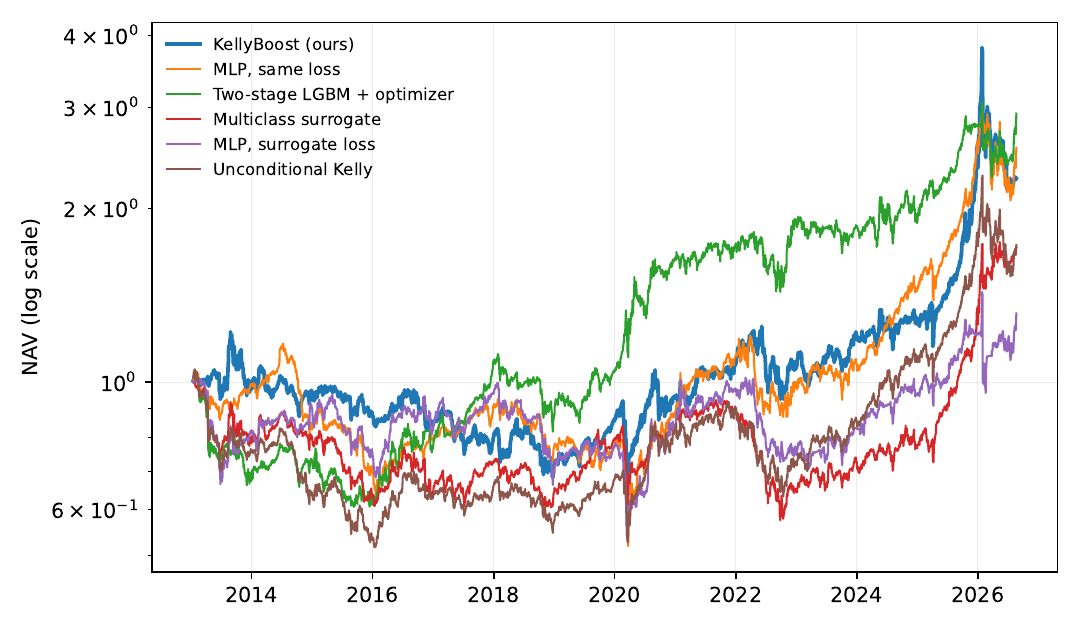}}{[run experiments]}
\caption{Out-of-sample equity curves (log scale), all methods, evaluation
segment. Every learned method is deployed identically: frozen
development-tuned parameters, expanding-window refits every 21 trading
days, $K_{\mathrm{ens}}=4$ leave-one-out ensemble.}
\label{fig:equity}
\end{figure}

\begin{figure}[t]
\centering
\IfFileExists{figures/weights.pdf}{\includegraphics[width=\textwidth]{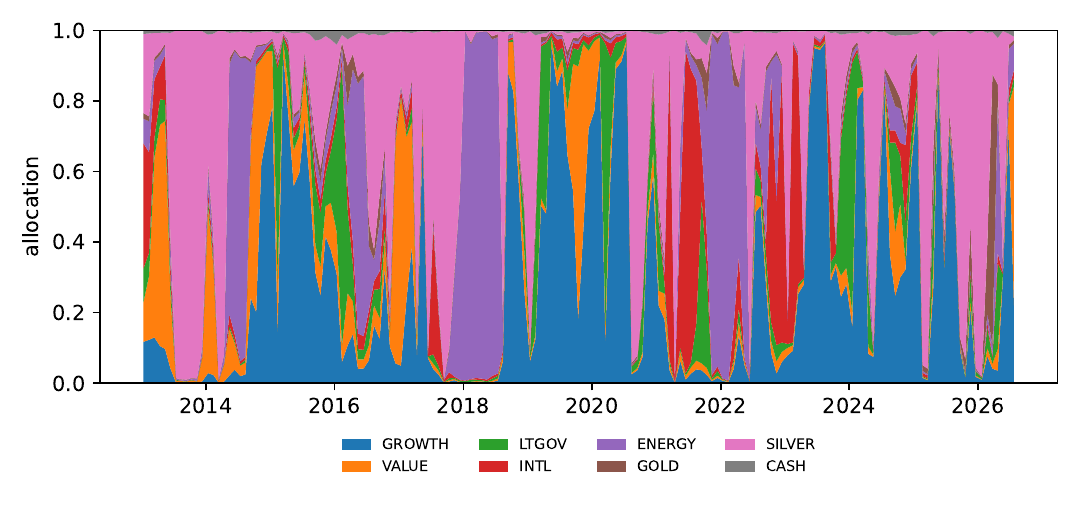}}{[run experiments]}
\caption{KellyBoost's deployed allocation over the evaluation segment
(committee average). The model was never told what an asset class is ---
the rotation structure is learned from the features alone.}
\label{fig:weights}
\end{figure}

Table~\ref{tab:main} and Figure~\ref{fig:equity} present the main
comparison; Figure~\ref{fig:weights} shows the deployed allocation. One reading rule
first: every pairwise bootstrap interval in Table~\ref{tab:main} straddles
zero. Monthly-horizon evidence over \evalyears{} years cannot separate any
two of these methods at conventional significance, so the section argues
from consistency --- across methods, and across ablations that move the
same lever from two directions --- and
reads the comparisons in two groups: first the $2 \times 2$ the paper's
claim rests on, where only the training loss moves; then the
pipeline-shape comparisons that mark the objective's boundary.

\paragraph{The loss effect: the $2 \times 2$.} Four methods share one
output space (softmax over legs) and pair up into two controlled
comparisons of the training loss: boosted trees under the growth objective
versus cross-entropy on the argmax label, and the identical MLP under the
same two losses. The growth objective wins all four cells --- trees
$0.47$ vs $0.35$ searched and $0.39$ vs $0.32$ hand-built (where both
methods see the identical 174 columns), MLPs $0.56$ vs $0.20$ searched
and $0.67$ vs $0.49$ hand-built (identical everything but the loss;
Tables~\ref{tab:main} and~\ref{tab:search}). Table~\ref{tab:losseffect}
puts the claim on its own statistical footing: the paired per-decision
difference is positive in every cell ($+0.07$ to $+0.37$ per 20-day
decision, $\times 100$), and pooling the four aligned difference series
--- an average per decision date, which preserves the cross-cell
correlation rather than pretending the cells are independent --- gives
$+0.18$ with bootstrap $\Pr(\Delta > 0) = 0.89$, stable at $0.84$--$0.96$
across bootstrap block lengths of 1 to 12 decisions. The pooled interval
still includes zero --- thirteen years of monthly decisions is a small
sample --- but the direction never flips, in any cell, under any block
length. The mechanism is not mysterious: cross-entropy on the argmax
label throws away the margin structure of returns --- a month won by 20
basis points and a month won by 20 points are the same training example
--- and both learner classes pay for discarding it. The comparison also
survives --- indeed sharpens under --- transaction costs, because the
growth objective trades less than the surrogate in every cell
(Section~\ref{sec:ablations}). (One residual
confound in the tree pair: the surrogate runs on LightGBM rather than
XGBoost, and in the searched rows each method deploys its own feature
list; the MLP pair has neither.)

\begin{table}[t]
\centering
\caption{The loss effect, cell by cell: paired per-decision differences in
realized log growth, growth objective minus argmax surrogate, holding the
learner class and feature pipeline fixed within each cell
($K_{\mathrm{ens}}=4$, 163 decisions). Intervals: circular moving-block
bootstrap, block 6 decisions, 10{,}000 draws. The pooled row averages the
four aligned difference series at each decision date, preserving
cross-cell correlation.}
\label{tab:losseffect}
\medskip
\IfFileExists{tables/loss_effect.tex}{\small\input{tables/loss_effect}}{[run experiments]}
\end{table}

\paragraph{Across learner classes: no conclusion.} At fixed loss, the MLP
edges the tree under the growth objective ($0.56$ vs $0.47$) and the tree
edges the MLP under the surrogate ($0.35$ vs $0.20$). Both gaps sit well
inside bootstrap noise and are confounded by everything a learner class is
--- capacity, optimization geometry, regularization style --- so we read
nothing into them, in either direction.

\paragraph{The objective's boundary: what shrinks, wins the window.} The
remaining comparisons change the pipeline's shape, not just its loss, and
they mark where faithfulness to the objective stops paying. The two-stage
pipeline shares KellyBoost's learner class and its decision rule's
\emph{intent} (growth optimality via the quadratic expansion) but learns
squared-error forecasts first; a monthly-horizon return is barely
predictable, so the fitted forecasts sit near zero, and an optimizer fed
near-zero expected returns with a full covariance matrix holds a mild,
diversified book. What the predict-then-optimize literature treats as the
seam's defect \citep{elmachtoub2022} functions here as implicit
regularization, and the two-stage pipeline out-deploys every end-to-end
learner. KellyBoost's side
of the boundary is symmetric: it does precisely what its objective asks,
estimating the conditional growth-optimal portfolio --- aggressive by
construction, deployed committee mean maximum leg weight $0.70$ --- and it
pays for the aggression in 2014--2018 and again in 2026, after banking the
study's largest single-year gain in 2025. The unconditional-Kelly row
completes the picture: conditioning on features is worth something ---
the conditional model beats the best constant portfolio. Getting the
decision loss right is necessary; this boundary is
the measured distance between necessary and sufficient.

\begin{table}[t]
\centering
\caption{What the feature search bought. Hand-built: the 174-column set
with random-search hyperparameters (the pipeline as first designed);
searched: Section~\ref{sec:search}. Deployment metrics, $K_{\mathrm{ens}}=4$.}
\label{tab:search}
\medskip
\IfFileExists{tables/search.tex}{\footnotesize\input{tables/search}}{[run experiments]}
\end{table}

\paragraph{The feature search improved selection scores far more than
deployment --- except where shrinkage could use it.} Table~\ref{tab:search}:
the joint search roughly doubled every method's development-segment
selection score. Deployed, the gains sort by pipeline, not by score: the
two-stage pipeline converts the searched features into a doubling of
realized growth --- better inputs make better forecasts, and its shrinkage
keeps the resulting bets survivable --- while the end-to-end methods
convert the same kind of search into little more than reshuffled
concentration, and both MLPs' searched lists actually deploy \emph{worse}
than their hand-built ones. The residual gap between selection movement and deployment
movement is selection bias made visible, which is why no selection score
in this paper is ever quoted as performance. For the record,
KellyBoost's deployed list is five columns (Table~\ref{tab:features}) ---
two-year window quantiles of international equity curvature, wheat and
copper, a soybean drift quantile, and financial-sector return skewness ---
a list whose economic legibility we leave to the reader's judgment, which
is part of the point.

\begin{table}[t]
\centering
\caption{KellyBoost's searched feature list, with each column's share of
the committed fit's total split gain.}
\label{tab:features}
\medskip
\IfFileExists{tables/features.tex}{\small\input{tables/features}}{[run experiments]}
\end{table}

\section{Ablations}\label{sec:ablations}

\begin{table}[t]
\centering
\caption{Ablations. Dev.\ score: mean log growth ($\times 100$) under the
development tuning protocol at the frozen KellyBoost parameters.
Deployment columns as in Table~\ref{tab:main}, all with
$K_{\mathrm{ens}}=4$.}
\label{tab:ablations}
\medskip
\IfFileExists{tables/ablations.tex}{\small\input{tables/ablations}}{[run experiments]}
\end{table}

\begin{table}[t]
\centering
\caption{Transaction-cost sensitivity: annualized return / excess Sharpe
under one-way proportional costs applied to each method's measured
turnover.}
\label{tab:costs}
\medskip
\IfFileExists{tables/costs.tex}{\footnotesize\input{tables/costs}}{[run experiments]}
\end{table}

\paragraph{The exact Hessian is a better optimizer and a worse deployer.}
Replacing $|h_k|$ from \eqref{eq:gradhess} with a constant (gradient-only
boosting) or with $g_k^2$ changes nothing about the objective's minimum and
everything about the optimization path. On the development protocol at
identical hyperparameters the exact curvature scores $+4.19$ against the
constant's $+1.43$ ($\times 100$; Table~\ref{tab:ablations}) --- it is,
as second-order theory promises, much the better optimizer of the training
objective. Deployed, the ranking inverts: the constant-Hessian variant's
undersized steps leave its softmax closer to uniform, and that accidental
shrinkage beats the faithfully optimized full-Kelly allocation out of
sample. The same inversion, from the other side: the $g_k^2$ surrogate is
the worst selector and roughly matches the exact Hessian's deployment. The
exact curvature is not decorative --- it does exactly what it claims ---
but on this problem, what it claims is the wrong thing to want.

\paragraph{Vector leaves.} Growing $K$ independent trees per round
(\texttt{one\_output\_per\_tree}) removes the shared partition and
multiplies parameters by $K$. It selects worse and deploys better
(Table~\ref{tab:ablations}) --- the same shrinkage signature as the
Hessian substitutions, since $K$ uncoordinated trees move the softmax less
coherently than one vector-leaf tree.

\paragraph{Ensembling stabilizes the weights and buys no growth.}
Prefix averages of the stored members (no refits) confirm the single-fit
instability claimed in Section~\ref{sec:ensemble} --- one-row perturbations
move deployed legs by whole percentage points, and the committee's gap to
the $K{=}16$ book shrinks at roughly $1/\sqrt{K}$ --- while realized
growth is flat in $K$ to within noise. Averaging removes an operational
fragility (the deployed book no longer depends on a split-tie coin flip),
not a performance penalty; we keep $K_{\mathrm{ens}} = 4$ for determinism,
and report this honestly rather than as an accuracy ingredient. Because
members differ by one training row each, the same numbers are
simultaneously the measurement of that instability.

\paragraph{Off-diagonal curvature: the end-to-end check.} The pure-numpy
engine's full-Hessian leaf solver
(\texttt{leaf\_solver="full"}) solves every leaf against
the aggregated $K \times K$ row Hessians of \eqref{eq:fullhess}
(eigenvalue-rectified) instead of the coordinate-wise step, with the split
search held fixed. Under the tuning protocol at the frozen parameters, on
one identical coarsened block grid, the diagonal solver scores $+0.0335$
and the full solver $+0.0339$ (mean log growth per block) --- within
noise --- consistent with the
$\geq 0.99$ leaf-step cosines of Section~\ref{sec:curvature}: at the
tuned regularization, the off-diagonal terms buy nothing that the
diagonal step had lost.

\paragraph{Costs.} Table~\ref{tab:costs} prices one-way proportional
costs into every method through its measured turnover; at realistic
ETF/fund cost levels the conclusions of Table~\ref{tab:main} are
unchanged. More important for the paper's claim: in all four cells of
the $2 \times 2$ the growth objective trades \emph{less} than its
argmax counterpart --- $7.9\times$ versus $13.1\times$ annualized
turnover for the searched trees, $4.8\times$ versus $10.8\times$
hand-built, $11.8\times$ versus $16.8\times$ and $15.4\times$ versus
$16.0\times$ for the MLPs --- so charging costs \emph{widens} the loss
effect: the pooled paired difference of Table~\ref{tab:losseffect}
rises from $+0.18$ gross to $+0.20$, $+0.22$ and $+0.25$ per decision
at 5, 10 and 20 bps, with $\Pr(\Delta > 0)$ rising from $0.89$ to
$0.95$. The reading we offer: the argmax label is the more brittle
target --- a hair's-breadth change in which leg wins the month flips
the entire label, and the classifier's book follows it --- while the
growth objective moves its weights only as far as estimated conditional
growth moves.

\section{Limitations}\label{sec:limitations}

The study is one universe, two feature pipelines, one 13.6-year
evaluation window; the protocol prevents tuning-set leakage but cannot manufacture
more independent history --- monthly-horizon evidence is intrinsically
scarce relative to daily-horizon studies. The formulation is long-only and
fully invested (leverage and shorting would enter through an affine map of
the softmax, which we have not evaluated). Labels overlap, which the purge
handles in tuning but which still reduces the effective sample size of any
metric computed on daily curves; our bootstrap blocks are sized to the
overlap. Backtests are gross of costs in the main table, with sensitivity
priced separately; taxes, market impact and capacity are out of scope.

One disclosure belongs here rather than in fine print. The evaluation
segment was scored twice, not once. The pipeline as first designed ---
hand-built features, random-search hyperparameters --- was evaluated,
underperformed, and the feature-search layer of
Section~\ref{sec:search} was added \emph{in response}. The search itself
never touched the evaluation segment, but the decision to build it was
informed by an evaluation-segment result, and a reader should discount the
searched rows of Table~\ref{tab:search} accordingly. Both pipelines are
reported in full, the first look is the hand-built row, and no further
design iteration followed the second look. The paper's central negative
finding is robust to this: it holds in both pipelines, and the disclosed
peek could only have biased the study \emph{toward} the positive result it
failed to find.

Two further limitations deserve their own sentences. First, the growth-optimal
criterion is the aggressive end of the risk spectrum, and estimation error
pushes a deployed full-Kelly policy toward over-betting
\citep{maclean2011}; we treat this as a dial rather than a defect ---
a fractional-Kelly blend of the deployed weights, or the CRRA objective
of Section~\ref{sec:objective}, tempers it --- but choosing the
dial's setting is a preference, not a statistical question this paper
answers. Second, scale: the per-round cost is linear in $K$, but a
cross-sectional problem with hundreds of assets is a different design
point --- there, weights of $O(1/K)$ leave the softmax's cells
individually tiny, and the natural extensions (asset-symmetric shared
features, grouped or hierarchical softmax over sectors, sparse top-$m$
allocation) change the architecture around the objective while the
objective, its derivatives and Proposition~\ref{prop:kelly} carry over
unchanged. We claim the asset-allocation regime, not the stock-selection
one. Finally, nothing in this paper is investment advice.

\section{Conclusion}\label{sec:conclusion}

End-to-end portfolio learning required backpropagation; boosted trees ---
the learner tabular practitioners actually reach for --- could join it
only through surrogates. KellyBoost closes that gap with nothing
approximate: softmax leaves on the simplex, loss equal to negative log
growth, gradient and curvature in closed form, a dependency-free reference
engine, and finite-difference tests that pin every derivative. The claim
the experiments then support is deliberately narrow and, within its
bounds, clean: swapping the classification surrogate for the exact
decision objective improves deployed growth \emph{within} both learner
classes we tried, in both feature pipelines --- four cells, one direction.
The same testbed prices the objective's boundary with equal candor. An
estimated conditional full-Kelly allocation concentrates beyond what
monthly-horizon signals can support, so the end-to-end learners ---
whichever their architecture --- trail a two-stage pipeline whose
squared-error stage shrinks forecasts toward zero; the exact Hessian, demonstrably the better
optimizer on the selection protocol, loses deployed to its own de-tuned
substitutes. \emph{What} the decision loss asks for dominates \emph{how
well} it is optimized. The constructive reading, which we leave to future
work: keep the exact objective, and put the shrinkage \emph{into} it on
purpose --- distributionally robust or explicitly regularized growth
objectives, turnover-penalized variants (differentiable in $z$), and a
risk dial calibrated by criteria the development segment can actually
price --- rather than receiving it as a happy accident of a misspecified
pipeline.

\bibliographystyle{plainnat}
\bibliography{references}

\appendix

\section{Derivation of the gradient and Hessian}\label{app:derivation}

Fix one row and drop $t$. With $\sigma = \mathrm{softmax}(z)$,
$S = \sum_j \sigma_j y_j$ and $\ell = -\log(1+S)$:
\[
\frac{\partial \sigma_j}{\partial z_k} = \sigma_j(\delta_{jk} - \sigma_k)
\;\;\Longrightarrow\;\;
\frac{\partial S}{\partial z_k}
= \sum_j y_j \sigma_j (\delta_{jk} - \sigma_k)
= \sigma_k y_k - \sigma_k S
= \sigma_k (y_k - S).
\]
Hence
\[
g_k = \frac{\partial \ell}{\partial z_k}
= -\frac{1}{1+S}\,\frac{\partial S}{\partial z_k}
= \frac{\sigma_k (S - y_k)}{1+S}.
\]
For the diagonal second derivative, differentiate $g_k$ once more:
\[
\frac{\partial g_k}{\partial z_k}
= \underbrace{\frac{\sigma_k(1-\sigma_k)(S-y_k)}{1+S}}_{\text{from } \sigma_k}
+ \underbrace{\frac{\sigma_k\,\sigma_k(y_k-S)}{1+S}}_{\text{from } S \text{ in the numerator}}
- \underbrace{\frac{\sigma_k(S-y_k)\,\sigma_k(y_k-S)}{(1+S)^2}}_{\text{from } S \text{ in the denominator}} .
\]
The first two terms combine to
$\sigma_k(S-y_k)(1-2\sigma_k)/(1+S) = g_k(1-2\sigma_k)$; the third equals
$+\sigma_k^2 (S-y_k)^2/(1+S)^2 = g_k^2$. Therefore
$h_k = g_k(1-2\sigma_k) + g_k^2$, as in \eqref{eq:gradhess}. Both $g$ and
$h$ are verified against central finite differences (relative tolerance
$10^{-5}$ and $10^{-3}$ respectively) in the accompanying tests.

\paragraph{The full Hessian.} Write $a_k = \sigma_k(y_k - S) = \partial S
/ \partial z_k$ and $\ell = \phi(S)$ for a general twice-differentiable
$\phi$ (the log loss is $\phi(S) = -\log(1+S)$). Differentiating $a_k$:
\[
\frac{\partial a_k}{\partial z_j}
= \sigma_k(\delta_{jk} - \sigma_j)(y_k - S) - \sigma_k\,\frac{\partial S}{\partial z_j}
= \delta_{jk}\, a_k - \sigma_j a_k - \sigma_k a_j ,
\]
which is symmetric in $(j,k)$; in matrix form $\partial a / \partial z =
\mathrm{diag}(a) - \sigma a^{\top} - a \sigma^{\top}$. The chain rule then
gives the per-row Hessian of \eqref{eq:fullhess}:
\[
\frac{\partial^2 \ell}{\partial z \,\partial z^{\top}}
= \phi''(S)\, a a^{\top}
+ \phi'(S)\left(\mathrm{diag}(a) - \sigma a^{\top} - a \sigma^{\top}\right).
\]
Its diagonal recovers $h_k = \phi'' a_k^2 + \phi' a_k (1 - 2\sigma_k)$,
which for the log loss is \eqref{eq:gradhess}. The identity is verified
entry-by-entry against second-order central finite differences in the
accompanying tests, including its symmetry.

\paragraph{The CRRA family.} For
$\phi_{\gamma}(S) = \big((1+S)^{1-\gamma} - 1\big)/(\gamma-1)$,
$\gamma \neq 1$:
\[
\phi_{\gamma}'(S) = -(1+S)^{-\gamma},
\qquad
\phi_{\gamma}''(S) = \gamma\, (1+S)^{-\gamma-1},
\]
so $g_k = \phi_{\gamma}' a_k$ and $h_k = \phi_{\gamma}'' a_k^2 +
g_k(1-2\sigma_k)$ drop into every formula above unchanged;
$\gamma \to 1$ recovers the log-loss derivatives continuously. The
gradient and the $\gamma \to 1$ limit are covered by finite-difference
tests as well.

\section{Search spaces and budgets}\label{app:tuning}

Every method is searched by the procedure of Section~\ref{sec:search}
with a fixed seed, on the identical development protocol (single-row
purged blocks every 11 rows; every 63 rows for the MLP, whose fit is an
order of magnitude slower --- selection scores are never compared across
methods, only across a method's own states). Wall-clock budgets on 30
cores: KellyBoost 60 minutes; the two LightGBM pipelines and the MLP 45
minutes each. Pool size 300, root 20 columns, at most 40 columns per
state. Parameter moves multiply every tunable by a factor in
$[1/1.25, 1.25]$ and clamp to the following spaces: KellyBoost --- rounds
$[60, 600]$, $\eta$ $[0.01, 0.5]$, L1 $[10^{-4}, 10]$, L2 $[0.01, 20]$,
$\gamma$ $[10^{-4}, 5]$, \texttt{min\_child\_weight} $[0.1, 100]$, leaves
$[7, 63]$ (loss-guided growth), CRRA risk aversion $[1, 10]$
(Section~\ref{sec:objective}). LightGBM --- trees $[40, 400]$, learning
rate $[0.01, 0.4]$, leaves $[4, 63]$, \texttt{min\_child\_samples}
$[5, 120]$, L1 $[10^{-4}, 5]$, L2 $[10^{-4}, 30]$, column subsample
$[0.3, 1.0]$, and for the two-stage pipeline a covariance window
$[500, 3000]$ rows. MLP --- hidden $(128, 64)$, learning rate
$[10^{-4}, 10^{-2}]$, weight decay $[10^{-6}, 0.1]$, epochs $[40, 240]$,
batch 256. The root hyperparameters are the winners of the earlier
random search over the hand-built set (40 / 12 / 16 / 16 trials; kept in
the repository for the record). Winning configurations and feature lists
are committed as JSON; the full search logs are kept under
\texttt{results/}.

\section{Reproduction}\label{app:repro}

\texttt{uv sync \&\& uv run pytest \&\& bash run\_all.sh} regenerates every
number, table and figure in this paper from the committed data snapshot
(fixed seeds for every fit; the searches are wall-clock budgeted, so their
exact trees are hardware-dependent --- the committed winners under
\texttt{experiments/params/} are the record the tables regenerate from;
roughly six hours on 30 CPU cores). Refreshing
the snapshot itself is one further command and requires no API key.

\end{document}

%% file: tables/main_results.tex
\begin{tabular}{lrrrrrr}
\toprule
 & $\overline{\log G}$ & \multicolumn{1}{c}{Ann.} & \multicolumn{1}{c}{Ann.} &  & \multicolumn{1}{c}{Max} & $\Delta$ ann.\ log growth \\
Method & ($\times 100$, 20d) & \multicolumn{1}{c}{return} & \multicolumn{1}{c}{vol.} & Sharpe & \multicolumn{1}{c}{DD} & vs.\ KellyBoost (\%) \\
\midrule
KellyBoost (ours) & 0.47 & 6.2\% & 23.9\% & 0.30 & -46.2\% & --- \\
MLP, same loss & 0.56 & 7.1\% & 20.5\% & 0.35 & -55.5\% & -0.9 [-9.2, +8.9] \\
Two-stage LGBM + optimizer & 0.82 & 8.2\% & 20.9\% & 0.40 & -40.4\% & -1.9 [-12.3, +10.0] \\
Multiclass surrogate & 0.35 & 4.0\% & 19.2\% & 0.21 & -42.7\% & +2.0 [-5.6, +9.4] \\
MLP, surrogate loss & 0.20 & 2.0\% & 21.8\% & 0.12 & -47.4\% & +4.0 [-5.1, +14.2] \\
Unconditional Kelly & 0.42 & 4.1\% & 19.3\% & 0.22 & -50.9\% & +2.0 [-4.6, +9.3] \\
\bottomrule
\end{tabular}

%% file: tables/loss_effect.tex
\begin{tabular}{lrrr}
\toprule
 & $\Delta\overline{\log G}$ & & \\
Cell (growth $-$ surrogate) & ($\times 100$, 20d) & 95\% CI & $\Pr(\Delta > 0)$ \\
\midrule
Boosted tree, searched & +0.12 & [-0.52, +0.75] & 0.65 \\
Boosted tree, hand-built & +0.07 & [-0.34, +0.45] & 0.63 \\
MLP, searched & +0.37 & [-0.26, +0.99] & 0.87 \\
MLP, hand-built & +0.18 & [-0.32, +0.75] & 0.77 \\
\midrule
Pooled (four cells) & +0.18 & [-0.11, +0.50] & 0.89 \\
\bottomrule
\end{tabular}

%% file: tables/search.tex
\begin{tabular}{llrrrrrr}
\toprule
 & & & $\overline{\log G}$ & Ann.\ & Ann.\ & & Max \\
Method & Features & $|F|$ & ($\times 100$) & return & vol. & Sharpe & DD \\
\midrule
KellyBoost & hand-built & 174 & 0.39 & 5.7\% & 22.9\% & 0.28 & -48.3\% \\
KellyBoost & searched & 5 & 0.47 & 6.2\% & 23.9\% & 0.30 & -46.2\% \\
\addlinespace
MLP, same loss & hand-built & 174 & 0.67 & 8.5\% & 19.4\% & 0.43 & -40.4\% \\
MLP, same loss & searched & 19 & 0.56 & 7.1\% & 20.5\% & 0.35 & -55.5\% \\
\addlinespace
Two-stage LGBM + optimizer & hand-built & 174 & 0.38 & 5.2\% & 23.8\% & 0.26 & -36.2\% \\
Two-stage LGBM + optimizer & searched & 15 & 0.82 & 8.2\% & 20.9\% & 0.40 & -40.4\% \\
\addlinespace
Multiclass surrogate & hand-built & 174 & 0.32 & 4.5\% & 20.1\% & 0.23 & -41.3\% \\
Multiclass surrogate & searched & 26 & 0.35 & 4.0\% & 19.2\% & 0.21 & -42.7\% \\
\addlinespace
MLP, surrogate loss & hand-built & 174 & 0.49 & 6.7\% & 19.5\% & 0.34 & -33.1\% \\
MLP, surrogate loss & searched & 23 & 0.20 & 2.0\% & 21.8\% & 0.12 & -47.4\% \\
\bottomrule
\end{tabular}

%% file: tables/features.tex
\begin{tabular}{lr}
\toprule
Feature & Gain share \\
\midrule
\texttt{INTL\_Q2Y\_SH2P90} & 35.9\% \\
\texttt{WHEAT\_Q2Y\_RH2P25} & 27.7\% \\
\texttt{COPPER\_Q2Y\_SWP10} & 22.0\% \\
\texttt{SOYBEAN\_Q2Y\_RH2P50} & 7.7\% \\
\texttt{XLF\_SKEW1Y} & 6.7\% \\
\bottomrule
\end{tabular}

%% file: tables/ablations.tex
\begin{tabular}{lrrrrr}
\toprule
 & Dev.\ score & $\overline{\log G}$ & Ann.\ & & Max \\
Variant & ($\times 100$) & ($\times 100$) & return & Sharpe & DD \\
\midrule
analytic Hessian (model) & 4.19 & 0.47 & 6.2\% & 0.30 & -46.2\% \\
constant Hessian & 1.43 & 0.67 & 8.3\% & 0.63 & -23.6\% \\
$g^2$ Hessian & 0.54 & 0.45 & 5.0\% & 0.26 & -51.7\% \\
one output per tree & 3.50 & 0.65 & 8.8\% & 0.40 & -45.7\% \\
\bottomrule
\end{tabular}

%% file: tables/costs.tex
\begin{tabular}{lrrrrr}
\toprule
Method & Turnover & 0 bps & 5 bps & 10 bps & 20 bps \\
\midrule
KellyBoost (ours) & 7.9x & 6.2\% / 0.30 & 5.8\% / 0.28 & 5.3\% / 0.27 & 4.5\% / 0.23 \\
MLP, same loss & 11.8x & 7.1\% / 0.35 & 6.5\% / 0.32 & 5.9\% / 0.30 & 4.6\% / 0.24 \\
Two-stage LGBM + optimizer & 16.5x & 8.2\% / 0.40 & 7.3\% / 0.36 & 6.4\% / 0.32 & 4.7\% / 0.24 \\
Multiclass surrogate & 13.1x & 4.0\% / 0.21 & 3.3\% / 0.18 & 2.7\% / 0.14 & 1.3\% / 0.07 \\
MLP, surrogate loss & 16.8x & 2.0\% / 0.12 & 1.2\% / 0.09 & 0.3\% / 0.05 & -1.4\% / -0.03 \\
Unconditional Kelly & 1.8x & 4.1\% / 0.22 & 4.0\% / 0.21 & 3.9\% / 0.21 & 3.7\% / 0.20 \\
\bottomrule
\end{tabular}